\documentclass[11pt]{article}

\usepackage[margin=1in]{geometry}
\usepackage{amsmath, amssymb, amsthm}
\usepackage{booktabs}
\usepackage{graphicx}
\usepackage{multirow}
\usepackage{array}
\usepackage{natbib}
\usepackage{url}
\usepackage{authblk}
\usepackage{setspace}
\usepackage{xcolor}

\usepackage[
    colorlinks=true,
    linkcolor=blue,
    citecolor=blue,
    urlcolor=blue
]{hyperref}

\newtheorem{theorem}{Theorem}

\begin{document}

\title{A Continuous-Weighted Win Ratio for Hierarchical Composite Endpoints}

\author[1]{Kexuan Li\thanks{
Email:
\href{mailto:kexuan.li.77@gmail.com}{kexuan.li.77@gmail.com}.

This manuscript is a working paper and comments are welcome. The author apologizes for any remaining rough edges; this draft was prepared in limited spare time between other duties.
}}

\date{\today}

\maketitle

\begin{abstract}
Hierarchical composite endpoints are commonly used in clinical trials when component outcomes differ in clinical importance. 
The win ratio compares patients across treatment groups according to a pre-specified order of clinical priority and has an intuitive interpretation. 
A strict hierarchical rule, however, uses a lower-priority endpoint only when all higher-priority endpoints are tied or non-informative. 
This may reduce power when the treatment effect is mainly expressed through lower-priority outcomes. 
Recent threshold-based extensions relax this hierarchy by allowing lower-priority outcomes to contribute when higher-priority outcomes are within pre-specified margins. We propose a continuous-weighted win ratio for hierarchical composite endpoints. 
The method replaces hard transitions between endpoint levels with smooth weights, so that a lower-priority endpoint can contribute gradually when the higher-priority endpoint is close to tied. 
The resulting estimator is a two-sample U-statistic. 
We derive its large-sample distribution using the Hoeffding projection and provide a consistent plug-in variance estimator. 
We also describe the role of the tuning parameter through local alternatives and local efficiency. 
The proposed class includes the strict win ratio and hard-threshold rules as special or limiting cases. 
Simulation studies show that relaxing the strict hierarchy can improve power when treatment effects are concentrated on lower-priority endpoints, but may reduce power when the highest-priority endpoint carries the main treatment signal.
\end{abstract}

\noindent\textbf{Keywords:}
Composite endpoint; generalized pairwise comparison; hierarchical endpoint; U-statistic; win ratio.

\section{Introduction}

Clinical trials often evaluate treatment benefit using multiple outcomes. 
These outcomes may differ substantially in clinical importance. 
For example, in cardiovascular trials, death is typically more important than hospitalization, and hospitalization may be more important than a continuous symptom or functional measure. 
A conventional composite endpoint may be difficult to interpret in this setting, because clinically severe and less severe events can contribute similarly to the analysis. 
This concern has motivated methods that compare patients according to a pre-specified clinical hierarchy rather than treating all components as exchangeable.

The win ratio of \citep{pocock2012win} is one of the most widely used approaches for prioritized composite endpoints. 
For each treatment-control pair, outcomes are compared sequentially according to clinical priority until the pair is classified as a treatment win, a control win, or a tie. 
The method is closely related to the Finkelstein--Schoenfeld test \citep{finkelstein1999combining} and to generalized pairwise comparisons \citep{buyse2010generalized}. 
These approaches use pairwise comparisons across treatment groups and can accommodate outcomes of different types, including time-to-event, binary, ordinal, and continuous endpoints. 
They also lead to several related treatment-effect summaries, including the win ratio, win odds, and net benefit \citep{dong2020win, brunner2021win}.

The statistical properties of win-ratio-type procedures have been studied from several perspectives. 
\citep{bebu2016large} developed large-sample inference for win-ratio analyses of composite outcomes with prioritized components. 
\citep{oakes2016win} studied the win-ratio statistic for trials with multiple event types. 
For censored outcomes, extensions of generalized pairwise comparisons have been proposed to reduce bias and improve efficiency \citep{peron2018extension}. 
Regression formulations have also been developed, including proportional win-fractions models for composite outcomes \citep{mao2021class}. 
Together, these developments have made pairwise comparison methods a useful class of tools for analyzing hierarchical endpoints.

A strict hierarchical rule has an important limitation. 
Lower-priority endpoints are used only when higher-priority endpoints do not determine the pairwise comparison. 
Consequently, if the treatment effect is weak or absent on the highest-priority endpoint but stronger on a lower-priority endpoint, the standard win ratio may have low power. 
This situation is not rare in practice. 
A treatment may have limited impact on mortality over the follow-up period but may improve hospitalization, symptoms, or other clinically meaningful outcomes. 
Recent work has therefore considered ways to relax the strict hierarchy. 
For example, \citep{mou2024win} proposed win-ratio methods with multiple thresholds, allowing lower-priority outcomes to contribute when higher-priority outcomes are within specified margins.

In this article, we propose a continuous-weighted version of the win ratio. 
Instead of using a hard threshold to decide whether the comparison moves from one endpoint level to the next, we use a smooth weight that depends on how close the higher-priority endpoint is to a tie. 
The resulting pairwise score is a convex combination of endpoint-level scores. 
This construction keeps the score centered under the null hypothesis of no treatment difference and preserves the two-sample U-statistic structure. 
The proposed method is not intended to uniformly dominate the standard win ratio. 
Rather, it provides a way to study and control the trade-off between preserving the priority of higher-ranking endpoints and recovering information from lower-priority endpoints.

The remainder of the article is organized as follows. 
Section 2 introduces notation and reviews hierarchical pairwise comparisons. 
Section 3 defines the continuous-weighted win-ratio score. 
Section 4 gives the estimand, large-sample inference, and variance estimation. 
Section 5 discusses the tuning parameter through local alternatives and local efficiency. 
Section 6 presents simulation studies, and Section 7 concludes with practical considerations.

\section{Hierarchical pairwise comparison}

Consider a two-arm randomized trial with \(n_1\) patients in the treatment group and \(n_0\) patients in the control group. 
Let \(Y_i=(Y_{i1},\ldots,Y_{iK})\), \(i=1,\ldots,n_1\), denote the prioritized outcomes for treatment patients, and \(X_j=(X_{j1},\ldots,X_{jK})\), \(j=1,\ldots,n_0\), denote the corresponding outcomes for control patients. 
The endpoints are ordered from highest to lowest clinical priority. 
For notational simplicity, all endpoints are coded so that larger values are favorable. 
For a treatment-control pair, write \(d_{ijk}=Y_{ik}-X_{jk}\).

The standard hierarchical win-ratio rule compares \(d_{ij1}\) first. 
If the first endpoint determines a winner, the pairwise comparison stops. 
Only if the first endpoint is tied or non-informative does the comparison proceed to the second endpoint, and so on. 
Let \(S_{\rm WR}(Y_i,X_j)\in\{-1,0,1\}\) denote the resulting score, with 1 indicating a treatment win, \(-1\) a control win, and 0 an unresolved tie. 
The corresponding net-benefit estimand is \(\Delta_{\rm WR}=E\{S_{\rm WR}(Y,X)\}\), estimated by
\[
\widehat{\Delta}_{\rm WR}
=
(n_1n_0)^{-1}
\sum_{i=1}^{n_1}\sum_{j=1}^{n_0}S_{\rm WR}(Y_i,X_j).
\]
The usual win ratio is the ratio of the probability of a treatment win to the probability of a control win. 
In this article we work mainly on the net-benefit scale, because it is centered at zero under the null and leads to a standard two-sample U-statistic. 
A ratio-scale summary can be reported as a secondary measure when desired.

\section{Continuous-weighted win-ratio score}

For each endpoint \(k\), let \(q_k(d)\in[-1,1]\) be a pairwise component score. 
Positive values favor treatment, negative values favor control, and zero indicates no preference. 
A hard component score is \(q_k(d)=1\) if \(d>c_k\), \(q_k(d)=-1\) if \(d<-c_k\), and \(q_k(d)=0\) otherwise, where \(c_k\ge 0\) is a clinical relevance margin. 
A smooth component score may also be used, for example \(q_k(d)=2\{1+\exp(-d/\beta_k)\}^{-1}-1\). 
Throughout, we assume \(q_k(-d)=-q_k(d)\), so that reversing treatment and control reverses the component score.

The proposed method modifies the transition between endpoint levels. 
Let \(w_k(|d|;\tau_k)\in[0,1]\) be a non-increasing function of \(|d|\). 
When \(|d|\) is small, the higher-priority endpoint is close to tied and \(w_k\) is close to 1, allowing the next endpoint to contribute more. 
When \(|d|\) is large, \(w_k\) is close to 0 and the pairwise comparison is mostly determined by the current endpoint. 
Examples include the Gaussian weight
\[
w_k(|d|;\tau_k)=\exp\{-d^2/(2\tau_k^2)\}
\]
and the logistic weight
\[
w_k(|d|;\tau_k,a_k)=\{1+\exp[(|d|-\tau_k)/a_k]\}^{-1}.
\]

For two endpoints, the continuous-weighted score is
\[
S_\eta(Y,X)
=
\{1-w_1(|d_1|;\tau_1)\}q_1(d_1)
+
w_1(|d_1|;\tau_1)q_2(d_2),
\]
where \(d_k=Y_k-X_k\), and \(\eta\) denotes the collection of weights, margins, and tuning parameters. 
The convex-combination form is important. 
A purely additive score, such as \(q_1(d_1)+w_1(|d_1|)q_2(d_2)\), is generally not centered under the null and can create a systematic shift in favor of one group. 
In contrast, the convex form assigns total weight one across endpoint levels.

For \(K\) endpoints, define \(A_1=1\) and \(A_{k+1}=A_kw_k(|d_k|;\tau_k)\) for \(k=1,\ldots,K-1\), with \(w_K\equiv 0\). 
The general score is
\[
S_\eta(Y,X)
=
\sum_{k=1}^{K}
A_k\{1-w_k(|d_k|;\tau_k)\}q_k(d_k).
\]
Because \(A_k\ge 0\) and the weights over the \(K\) endpoint levels sum to one, \(S_\eta(Y,X)\in[-1,1]\). 
Moreover, if each \(w_k\) depends on \(d_k\) only through \(|d_k|\) and each \(q_k\) is antisymmetric, then
\[
S_\eta(Y,X)=-S_\eta(X,Y).
\]
Therefore, under the global null in which \(Y\) and \(X\) have the same distribution, \(E\{S_\eta(Y,X)\}=0\).

The standard win-ratio rule and hard-threshold rules are included as special cases. 
If \(w_k(|d|)=I(d=0)\) and \(q_k(d)=\operatorname{sign}(d)\), the strict lexicographic rule is recovered. 
If \(w_k(|d|)=I(|d|\le \tau_k)\), the rule becomes a threshold-based hierarchical comparison. 
The continuous-weighted version replaces this discontinuous transition with a smooth one, while retaining the same pairwise comparison structure.

\section{Estimand and large-sample inference}

For a fixed choice of \(\eta\), define the treatment-effect estimand
\[
\Delta_\eta=E\{S_\eta(Y,X)\}.
\]
The corresponding estimator is
\[
\widehat{\Delta}_\eta=(n_1n_0)^{-1}\sum_{i=1}^{n_1}\sum_{j=1}^{n_0}S_\eta(Y_i,X_j).
\]
The primary null hypothesis is \(H_0:\Delta_\eta=0\), with one-sided alternative \(H_1:\Delta_\eta>0\). 
The tuning parameter \(\eta\) is treated as fixed and pre-specified in the inferential results below.

Let \(N=n_1+n_0\) and assume \(n_1/N\to\rho\in(0,1)\). 
Write \(h_\eta(y,x)=S_\eta(y,x)\). 
Define
\[
\phi_1(y)=E\{h_\eta(y,X)\}-\Delta_\eta,\qquad
\phi_0(x)=E\{h_\eta(Y,x)\}-\Delta_\eta.
\]
The asymptotic variance is
\[
\sigma_\eta^2=\rho^{-1}\operatorname{Var}\{\phi_1(Y)\}
+(1-\rho)^{-1}\operatorname{Var}\{\phi_0(X)\}.
\]

\begin{theorem}
Assume \(h_\eta(Y,X)=S_\eta(Y,X)\) is bounded, 
\(\sigma_\eta^2>0\), and \(n_1/N\to\rho\in(0,1)\), where \(N=n_1+n_0\). 
Then
\[
\sqrt{N}(\widehat{\Delta}_\eta-\Delta_\eta)
\overset{d}{\longrightarrow}
N(0,\sigma_\eta^2),
\]
where
\[
\sigma_\eta^2
=
\rho^{-1}\operatorname{Var}\{\phi_1(Y)\}
+
(1-\rho)^{-1}\operatorname{Var}\{\phi_0(X)\}.
\]
\end{theorem}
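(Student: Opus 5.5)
The plan is to use the Hoeffding decomposition of the two-sample U-statistic \(\widehat{\Delta}_\eta\), where the samples \(Y_1,\ldots,Y_{n_1}\) and \(X_1,\ldots,X_{n_0}\) are independent and i.i.d. within each arm. Write
\[
h_\eta(y,x)=\Delta_\eta+\phi_1(y)+\phi_0(x)+r(y,x),
\qquad
r(y,x)=h_\eta(y,x)-\Delta_\eta-\phi_1(y)-\phi_0(x).
\]
Averaging this identity over all \(n_1n_0\) pairs gives
\[
\widehat{\Delta}_\eta-\Delta_\eta
=\frac{1}{n_1}\sum_{i=1}^{n_1}\phi_1(Y_i)+\frac{1}{n_0}\sum_{j=1}^{n_0}\phi_0(X_j)+R_N,
\qquad
R_N=\frac{1}{n_1n_0}\sum_{i,j}r(Y_i,X_j).
\]
The proof then has two parts: a central limit theorem for the linear (projection) part, and a proof that \(\sqrt{N}R_N\to 0\) in probability.

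For the linear part, \(\phi_1(Y_i)\) are i.i.d. with mean zero, and \(\phi_0(X_j)\) are i.i.d. with mean zero. Both have finite variance, because boundedness of \(h_\eta\) makes \(\phi_1\) and \(\phi_0\) bounded. By the classical CLT,
\[
\sqrt{n_1}\,\bar\phi_1\to N(0,\operatorname{Var}\phi_1(Y)),
\qquad
\sqrt{n_0}\,\bar\phi_0\to N(0,\operatorname{Var}\phi_0(X)).
\]
The two averages are independent, so the pair converges jointly. Multiplying by \(\sqrt{N/n_1}\to\rho^{-1/2}\) and \(\sqrt{N/n_0}\to(1-\rho)^{-1/2}\) and applying Slutsky's lemma, \(\sqrt N(\bar\phi_1+\bar\phi_0)\) converges to \(N(0,\sigma_\eta^2)\). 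The assumption \(\sigma_\eta^2>0\) only ensures this limit is nondegenerate. One could also use Lindeberg--Feller on the triangular array directly, but the product argument above suffices.

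The main step is bounding the remainder \(R_N\). First I will check the degeneracy of \(r\): for every \(y\), \(E\,r(y,X)=0\), and for every \(x\), \(E\,r(Y,x)=0\). This follows directly from the definitions of \(\phi_1\) and \(\phi_0\). Next I will expand
\[
E R_N^2=(n_1n_0)^{-2}\sum_{i,j,i',j'}E\{r(Y_i,X_j)\,r(Y_{i'},X_{j'})\}.
\]
Whenever \(i\ne i'\), condition on the \(X\)'s. Given the \(X\)'s, the two factors are independent functions of \(Y_i\) and \(Y_{i'}\), and each has conditional mean zero by degeneracy, so the term vanishes. The same argument applies when \(j\ne j'\), conditioning on the \(Y\)'s. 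Only the diagonal terms \(i=i'\), \(j=j'\) survive, so
\[
E R_N^2=(n_1n_0)^{-1}E\,r(Y,X)^2\le C/(n_1n_0),
\]
again by boundedness. Hence \(N\,E R_N^2=O(N/(n_1n_0))=O(1/N)\to 0\). Chebyshev's inequality then gives \(\sqrt N R_N=o_p(1)\).

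Combining the two parts with Slutsky's lemma yields the stated limit. The only genuinely delicate point is the orthogonality computation for the cross terms in \(E R_N^2\). Everything else follows from boundedness of \(h_\eta\) and the assumption \(n_1/N\to\rho\).
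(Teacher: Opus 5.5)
Your proposal is correct and follows essentially the same route as the paper's proof: the same Hoeffding decomposition, the same orthogonality argument for the degenerate remainder giving \(E R_N^2 = O\{(n_1n_0)^{-1}\}\), and the same independent CLTs for the two projection averages combined via the scaling \(\sqrt{N/n_1}\), \(\sqrt{N/n_0}\). Your cross-term argument (condition on all the \(X\)'s whenever \(i\ne i'\), and on all the \(Y\)'s whenever \(j\ne j'\)) is slightly more complete than the paper's, which spells out only the cases where exactly one index coincides.
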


\begin{proof}
Let \(h(y,x)=h_\eta(y,x)\) for notational simplicity, and write
\[
\Delta=E\{h(Y,X)\}.
\]
The estimator is
\[
\widehat{\Delta}
=
\frac{1}{n_1n_0}
\sum_{i=1}^{n_1}
\sum_{j=1}^{n_0}
h(Y_i,X_j).
\]
Define the first-order projection terms
\[
\phi_1(y)=E\{h(y,X)\}-\Delta,
\qquad
\phi_0(x)=E\{h(Y,x)\}-\Delta.
\]
By construction, \(E\{\phi_1(Y)\}=0\) and \(E\{\phi_0(X)\}=0\).

Now define the second-order remainder kernel
\[
r(y,x)
=
h(y,x)-\Delta-\phi_1(y)-\phi_0(x).
\]
This kernel is degenerate in both arguments. Indeed,
\[
E\{r(y,X)\}
=
E\{h(y,X)\}-\Delta-\phi_1(y)-E\{\phi_0(X)\}
=
0,
\]
and similarly,
\[
E\{r(Y,x)\}=0.
\]
Using this decomposition,
\[
h(Y_i,X_j)-\Delta
=
\phi_1(Y_i)+\phi_0(X_j)+r(Y_i,X_j).
\]
Averaging over all treatment-control pairs gives the Hoeffding decomposition
\[
\widehat{\Delta}-\Delta
=
\frac{1}{n_1}\sum_{i=1}^{n_1}\phi_1(Y_i)
+
\frac{1}{n_0}\sum_{j=1}^{n_0}\phi_0(X_j)
+
R_{n_1,n_0},
\]
where
\[
R_{n_1,n_0}
=
\frac{1}{n_1n_0}
\sum_{i=1}^{n_1}
\sum_{j=1}^{n_0}
r(Y_i,X_j).
\]

We next show that the remainder is negligible at the \(\sqrt{N}\) scale. 
Because \(h\) is bounded, \(r\) has finite second moment. 
Furthermore, the degeneracy of \(r\) implies that most covariance terms vanish. 
Specifically,
\[
E\{r(Y_i,X_j)r(Y_{i'},X_{j'})\}=0
\]
whenever \(i\ne i'\) or \(j\ne j'\), except for the case in which both indices coincide. 
For example, if \(i=i'\) but \(j\ne j'\), then conditional on \(Y_i\),
\[
E\{r(Y_i,X_j)r(Y_i,X_{j'})\mid Y_i\}
=
E\{r(Y_i,X_j)\mid Y_i\}
E\{r(Y_i,X_{j'})\mid Y_i\}
=
0.
\]
The same argument applies when \(j=j'\) but \(i\ne i'\). 
Therefore
\[
\operatorname{Var}(R_{n_1,n_0})
=
\frac{1}{n_1^2n_0^2}
\sum_{i=1}^{n_1}\sum_{j=1}^{n_0}
E\{r(Y_i,X_j)^2\}
=
O\{(n_1n_0)^{-1}\}.
\]
Since \(n_1/N\to\rho\in(0,1)\), both \(n_1\) and \(n_0\) are of order \(N\), so
\[
N\operatorname{Var}(R_{n_1,n_0})
=
O(N^{-1})\to 0.
\]
Thus \(\sqrt{N}R_{n_1,n_0}=o_p(1)\).

It remains to study the two first-order projection sums. 
The treatment and control samples are independent, so the two sums are independent. 
By the central limit theorem,
\[
\sqrt{N}
\frac{1}{n_1}
\sum_{i=1}^{n_1}\phi_1(Y_i)
=
\sqrt{\frac{N}{n_1}}
\left\{
\frac{1}{\sqrt{n_1}}
\sum_{i=1}^{n_1}\phi_1(Y_i)
\right\}
\overset{d}{\longrightarrow}
N\left(0,\rho^{-1}\operatorname{Var}\{\phi_1(Y)\}\right),
\]
and similarly
\[
\sqrt{N}
\frac{1}{n_0}
\sum_{j=1}^{n_0}\phi_0(X_j)
\overset{d}{\longrightarrow}
N\left(0,(1-\rho)^{-1}\operatorname{Var}\{\phi_0(X)\}\right).
\]
Because the two limiting normal variables are independent, their sum is normal with variance
\[
\sigma_\eta^2
=
\rho^{-1}\operatorname{Var}\{\phi_1(Y)\}
+
(1-\rho)^{-1}\operatorname{Var}\{\phi_0(X)\}.
\]
Combining this with \(\sqrt{N}R_{n_1,n_0}=o_p(1)\) proves the result.
\end{proof}

A plug-in variance estimator can be obtained from empirical projections. 
Let
\[
\widehat{\phi}_{1i}=n_0^{-1}\sum_{j=1}^{n_0}h_\eta(Y_i,X_j)-\widehat{\Delta}_\eta,\qquad
\widehat{\phi}_{0j}=n_1^{-1}\sum_{i=1}^{n_1}h_\eta(Y_i,X_j)-\widehat{\Delta}_\eta.
\]
Let \(s_1^2\) and \(s_0^2\) be the sample variances of \(\widehat{\phi}_{1i}\) and \(\widehat{\phi}_{0j}\), respectively. 
With \(\widehat{\rho}=n_1/N\), define
\[
\widehat{\sigma}_\eta^2=\widehat{\rho}^{-1}s_1^2+(1-\widehat{\rho})^{-1}s_0^2.
\]

The empirical projection variance estimator is consistent under the same conditions. 
To see this, note that
\[
\widehat{\phi}_{1i}
=
\frac{1}{n_0}\sum_{j=1}^{n_0}h(Y_i,X_j)-\widehat{\Delta}
\]
is an empirical version of \(\phi_1(Y_i)=E\{h(Y_i,X)\}-\Delta\). 
For fixed \(Y_i\), the average \(n_0^{-1}\sum_j h(Y_i,X_j)\) converges in probability to \(E\{h(Y_i,X)\}\), and \(\widehat{\Delta}\to_p\Delta\). 
Because the kernel is bounded, the convergence is sufficiently regular to imply that the sample variance of \(\widehat{\phi}_{1i}\) converges in probability to \(\operatorname{Var}\{\phi_1(Y)\}\). 
The same argument applies to the control projection. 
Therefore, with \(\widehat{\rho}=n_1/N\),
\[
\widehat{\sigma}_\eta^2
=
\widehat{\rho}^{-1}s_1^2
+
(1-\widehat{\rho})^{-1}s_0^2
\to_p
\sigma_\eta^2.
\]
The Wald statistic
\[
Z_\eta=\frac{\sqrt{N}\widehat{\Delta}_\eta}{\widehat{\sigma}_\eta}
\]
is asymptotically standard normal under \(H_0\), and a one-sided level-\(\alpha\) test rejects when \(Z_\eta>z_{1-\alpha}\).
A ratio-scale summary may also be reported. 
One option is \(\Omega_\eta=(1+\Delta_\eta)/(1-\Delta_\eta)\), with estimator \(\widehat{\Omega}_\eta=(1+\widehat{\Delta}_\eta)/(1-\widehat{\Delta}_\eta)\). 
For scores taking values in \(\{-1,0,1\}\), this is a win-odds type summary. 
For continuous pairwise scores, we use \(\Delta_\eta\) as the primary estimand because it has a direct U-statistic form and a simple null value.

\section{Role of the tuning parameter}

The tuning parameter determines how rapidly the pairwise comparison moves from a higher-priority endpoint to the next endpoint. 
Small values produce a rule close to the strict hierarchy, whereas large values allow lower-priority endpoints to contribute more often. 
This is a clinically meaningful choice. 
If the treatment effect is concentrated on the highest-priority endpoint, excessive weighting of lower-priority endpoints may reduce power. 
If the treatment effect is mainly on a lower-priority endpoint, the strict hierarchy may have little power.

The trade-off can be described through local alternatives. 
Consider a regular submodel indexed by a vector \(b=(b_1,\ldots,b_K)\), where \(b_k\) describes the local effect direction for endpoint \(k\). 
Suppose that, under a sequence of local alternatives, 
\[
\Delta_{\eta,N}(b)=E_b\{S_\eta(Y,X)\}=N^{-1/2}\dot{\Delta}_\eta(b)+o(N^{-1/2}).
\]
Then the statistic \(Z_\eta\) has limiting distribution
\[
Z_\eta \overset{d}{\longrightarrow} N\{\dot{\Delta}_\eta(b)/\sigma_\eta,1\}.
\]
The local power is therefore
\[
1-\Phi\{z_{1-\alpha}-\dot{\Delta}_\eta(b)/\sigma_\eta\},
\]
and the local efficacy is proportional to
\[
\mathcal{E}_\eta(b)=\dot{\Delta}_\eta(b)^2/\sigma_\eta^2.
\]
Thus, for a fixed clinically plausible direction \(b\), one may choose \(\eta\) to maximize \(\mathcal{E}_\eta(b)\). 
If the direction of treatment effect is uncertain, an average criterion \(E_\Pi\{\mathcal{E}_\eta(b)\}\) or a maximin criterion \(\min_{b\in\mathcal{B}}\mathcal{E}_\eta(b)\) may be considered over a pre-specified set of alternatives. 
Such choices should be made before unblinding in confirmatory trials; otherwise, the adaptive choice of \(\eta\) would need to be accounted for in inference.

The two-endpoint case gives useful intuition. 
Let
\[
S_\tau(d_1,d_2)=\{1-w_\tau(d_1)\}q_1(d_1)+w_\tau(d_1)q_2(d_2),
\]
where \(w_\tau(d_1)\) is an even function of \(d_1\). 
Under a smooth location-shift model with local shift \(b=(b_1,b_2)\), and assuming differentiability of \(q_1,q_2\) and \(w_\tau\), the first-order signal can be written as
\[
\dot{\Delta}_\tau(b)
=
b_1 E\left[
\{1-w_\tau(D_1)\}q_1'(D_1)
+
w_\tau'(D_1)\{q_2(D_2)-q_1(D_1)\}
\right]
+
b_2 E\{w_\tau(D_1)q_2'(D_2)\},
\]
where \(D_k=Y_k-X_k\) under the null. 
This expression shows how the tuning parameter affects the two sources of signal. 
The term involving \(b_1\) represents information from the higher-priority endpoint, modified by the transition weight. 
The term involving \(b_2\) is multiplied by \(w_\tau(D_1)\), so lower-priority information contributes more when the weight is larger. 
The variance \(\sigma_\tau^2\) is determined by the same U-statistic projection variance described in Section 4. 
Therefore the optimal value of \(\tau\) depends on both the anticipated endpoint-level treatment effects and the variance induced by the pairwise score.

For non-smooth component scores, such as \(q_k(d)=\operatorname{sign}(d)\), the same interpretation can be obtained by replacing the derivative calculation with finite differences or by using a smoothed approximation to the sign function. 
This is sufficient for selecting a pre-specified value of \(\tau\) in design-stage evaluations and simulations.

\section{Simulation study 1: two continuous endpoints}

We first evaluated the proposed approach in a simple setting with two continuous endpoints. 
This simulation was intended to study the behavior of the weighting rule in a controlled setting before considering time-to-event outcomes and censoring. 
The first endpoint was treated as the higher-priority endpoint and the second endpoint as the lower-priority endpoint. 
Both endpoints were coded such that larger values represented better outcomes.

For each simulated trial, the control outcomes were generated from
\[
X_j=(X_{j1},X_{j2})^\top \sim N_2(0,\Sigma), \qquad j=1,\ldots,n_0,
\]
and the treatment outcomes were generated from
\[
Y_i=(Y_{i1},Y_{i2})^\top \sim N_2(\theta,\Sigma), \qquad i=1,\ldots,n_1,
\]
where
\[
\Sigma=
\begin{pmatrix}
1 & 0.30 \\
0.30 & 1
\end{pmatrix}.
\]
We used \(n_1=n_0=100\), and each setting was replicated 5000 times. 
A one-sided test at level 0.025 was used.

Five treatment-effect settings were considered:
\[
\theta=(0,0), \quad
\theta=(0.45,0), \quad
\theta=(0,0.45), \quad
\theta=(0.32,0.32), \quad
\theta=(0.45,-0.20).
\]
These correspond to the global null, an effect only on the higher-priority endpoint, an effect only on the lower-priority endpoint, effects on both endpoints, and effects in opposite directions across the two endpoints, respectively. 
The opposite-direction setting was included because, in practice, not all components of a composite endpoint need to move in the same direction.

For a treatment-control pair, define
\[
d_1=Y_{i1}-X_{j1}, \qquad d_2=Y_{i2}-X_{j2}.
\]
The component-level comparison was
\[
q_k(d_k)=\operatorname{sign}(d_k), \qquad k=1,2.
\]
Thus, the component comparison itself was not smoothed; the simulation focused on how information was passed from the higher-priority endpoint to the lower-priority endpoint.

Four methods were compared. 
The first was the strict hierarchical rule, under which the second endpoint was used only when the first endpoint was exactly tied. 
Because the endpoints were continuous, this rule was essentially determined by the first endpoint. 
The second was a hard-threshold rule,
\[
S_{\mathrm{HT},\tau}(Y_i,X_j)
=
\{1-I(|d_1|\le \tau)\}q_1(d_1)
+
I(|d_1|\le \tau)q_2(d_2).
\]
The third used a Gaussian weight,
\[
w_{\mathrm{G}}(d_1;\tau)
=
\exp\left\{-\frac{d_1^2}{2\tau^2}\right\},
\]
with score
\[
S_{\mathrm{G},\tau}(Y_i,X_j)
=
\{1-w_{\mathrm{G}}(d_1;\tau)\}q_1(d_1)
+
w_{\mathrm{G}}(d_1;\tau)q_2(d_2).
\]
The fourth used a logistic weight,
\[
w_{\mathrm{L}}(d_1;\tau,a)
=
\frac{1}{1+\exp\{(|d_1|-\tau)/a\}},
\]
where \(a=0.35\), and the corresponding score \(S_{\mathrm{L},\tau}\) was defined in the same way. 
The tuning parameter was varied over
\[
\tau \in \{0.05,0.10,0.20,0.35,0.50,0.75,1.00,1.50,2.00,3.00,5.00\}.
\]

For each method, the net-benefit parameter was estimated by
\[
\widehat{\Delta}_{\tau}
=
\frac{1}{n_1n_0}
\sum_{i=1}^{n_1}
\sum_{j=1}^{n_0}
S_{\tau}(Y_i,X_j).
\]
The standard error was estimated using the empirical first-order projection of the two-sample U-statistic.

Under the global null, the empirical rejection rates were close to the nominal level for all methods (Table~\ref{tab:sim1-type1}). 
Across the values of \(\tau\), the rejection rates ranged from 0.02500 to 0.03000 for the hard-threshold rule, from 0.02500 to 0.03000 for the Gaussian weight, and from 0.02480 to 0.03000 for the logistic weight. 
The strict hierarchy had an empirical rejection rate of 0.02480. 
The mean estimated net benefit was close to zero for all methods under the null.

\begin{table}[htbp]
\centering
\caption{Simulation study 1: empirical type I error under the global null.}
\label{tab:sim1-type1}
\begin{tabular}{lcccc}
\toprule
Method & Minimum & Median & Maximum & Nominal level \\
\midrule
Strict hierarchy & 0.02480 & 0.02480 & 0.02480 & 0.02500 \\
Hard threshold   & 0.02500 & 0.02620 & 0.03000 & 0.02500 \\
Gaussian weight  & 0.02500 & 0.02640 & 0.03000 & 0.02500 \\
Logistic weight  & 0.02480 & 0.02660 & 0.03000 & 0.02500 \\
\bottomrule
\end{tabular}
\end{table}

The results under the alternatives are summarized in Table~\ref{tab:sim1-best}. 
When the treatment effect was present on both endpoints, the strict hierarchy had power 0.60720. 
Allowing the second endpoint to contribute improved power substantially. 
The best rejection rates were 0.77980 for the hard-threshold rule, 0.78120 for the Gaussian weight, and 0.78380 for the logistic weight, all attained at \(\tau=1.00\). 
Thus, when both endpoints carried treatment information, a moderate value of \(\tau\) gave the best performance.

When the effect was present only on the higher-priority endpoint, the strict hierarchy was best, with power 0.86980. 
The hard-threshold and Gaussian rules with very small \(\tau\) performed similarly, with powers 0.86340 and 0.86220, respectively. 
Power decreased as \(\tau\) increased, reflecting dilution of the first endpoint by the second endpoint, which contained no treatment signal in this setting.

When the effect was present only on the lower-priority endpoint, the strict hierarchy had essentially no power, with an empirical rejection rate of 0.02440. 
In contrast, the hard-threshold, Gaussian, and logistic rules achieved powers 0.87640, 0.86920, and 0.87620, respectively, at \(\tau=5.00\). 
This setting illustrates the main motivation for relaxing the strict hierarchy: when the treatment effect is carried by a lower-priority endpoint, the strict rule may discard most of the relevant information.

In the opposite-direction setting, the strict hierarchy again had the highest rejection rate, 0.87460. 
The hard-threshold and Gaussian rules with very small \(\tau\) were close to the strict rule, with powers 0.86020 and 0.85540. 
However, increasing \(\tau\) caused a sharp decline in power and in the estimated net benefit, because the lower-priority endpoint favored control. 
This scenario shows that relaxing the hierarchy may be harmful when component effects are not aligned.

\begin{table}[htbp]
\centering
\caption{Simulation study 1: best empirical rejection rate over tuning parameters.}
\label{tab:sim1-best}
\begin{tabular}{llccc}
\toprule
Scenario & Method & Selected \(\tau\) & Rejection rate & Mean \(\widehat{\Delta}\) \\
\midrule
Both endpoints & Strict hierarchy & --   & 0.60720 & 0.17978 \\
Both endpoints & Hard threshold   & 1.00 & 0.77980 & 0.21145 \\
Both endpoints & Gaussian weight  & 1.00 & 0.78120 & 0.20438 \\
Both endpoints & Logistic weight  & 1.00 & 0.78380 & 0.20432 \\
\midrule
Higher-priority only & Strict hierarchy & --   & 0.86980 & 0.24859 \\
Higher-priority only & Hard threshold   & 0.05 & 0.86340 & 0.24629 \\
Higher-priority only & Gaussian weight  & 0.05 & 0.86220 & 0.24561 \\
Higher-priority only & Logistic weight  & 0.05 & 0.83640 & 0.22775 \\
\midrule
Lower-priority only & Strict hierarchy & --   & 0.02440 & 0.00090 \\
Lower-priority only & Hard threshold   & 5.00 & 0.87640 & 0.25091 \\
Lower-priority only & Gaussian weight  & 5.00 & 0.86920 & 0.24234 \\
Lower-priority only & Logistic weight  & 5.00 & 0.87620 & 0.25072 \\
\midrule
Opposite direction & Strict hierarchy & --   & 0.87460 & 0.24996 \\
Opposite direction & Hard threshold   & 0.05 & 0.86020 & 0.24453 \\
Opposite direction & Gaussian weight  & 0.05 & 0.85540 & 0.24306 \\
Opposite direction & Logistic weight  & 0.05 & 0.78380 & 0.21352 \\
\bottomrule
\end{tabular}
\end{table}

Figure~\ref{fig:sim1-power} shows the rejection rate as a function of \(\tau\). 
The shape of the curves was consistent with the endpoint-level treatment effects. 
In the higher-priority-only setting, the curves decreased as \(\tau\) increased. 
In the lower-priority-only setting, the curves increased with \(\tau\). 
In the setting with effects on both endpoints, the curves peaked at an intermediate value of \(\tau\). 
In the opposite-direction setting, increasing \(\tau\) reduced the treatment signal and eventually led to negative estimated net benefit.

The Gaussian and logistic weights gave similar qualitative conclusions, but their behavior differed in some regions of \(\tau\). 
For the lower-priority-only setting, the Gaussian weight had higher power than the hard-threshold rule at moderate values of \(\tau\); for example, at \(\tau=1.00\), the power was 0.51540 for the Gaussian weight and 0.41400 for the hard-threshold rule. 
At large \(\tau\), the hard-threshold and logistic rules were slightly better, because they more completely passed the comparison to the lower-priority endpoint. 
For the setting with effects on both endpoints, all relaxed rules were close at their best tuning values. 
The logistic rule was generally closer to the hard-threshold rule than the Gaussian rule, which is expected from the shape of the logistic transition with \(a=0.35\).

Overall, this first simulation suggests that relaxing a strict hierarchy can recover substantial power when the treatment effect is mainly expressed through a lower-priority endpoint. 
At the same time, relaxing the hierarchy can reduce power when the higher-priority endpoint is the main or only favorable component, and it may be particularly undesirable when the lower-priority endpoint moves in the opposite direction. 
Thus, the tuning parameter should not be viewed as a purely numerical choice; it determines how strongly the analysis preserves the priority of the higher-level endpoint.

\begin{figure}[htbp]
\centering
\includegraphics[width=0.95\textwidth]{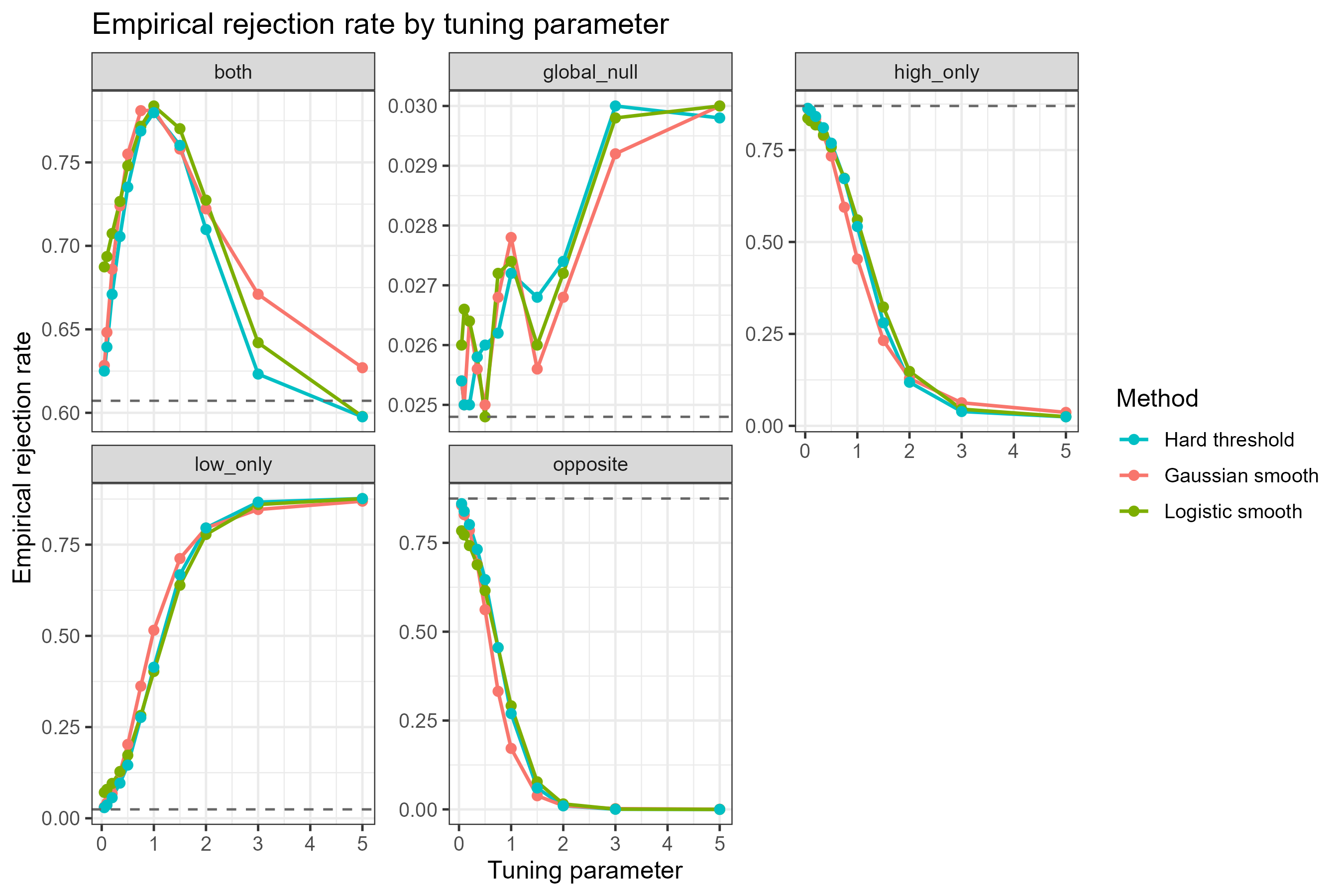}
\caption{Simulation study 1: empirical rejection rate by tuning parameter. The dashed horizontal line represents the strict hierarchical rule.}
\label{fig:sim1-power}
\end{figure}

\begin{figure}[htbp]
\centering
\includegraphics[width=0.95\textwidth]{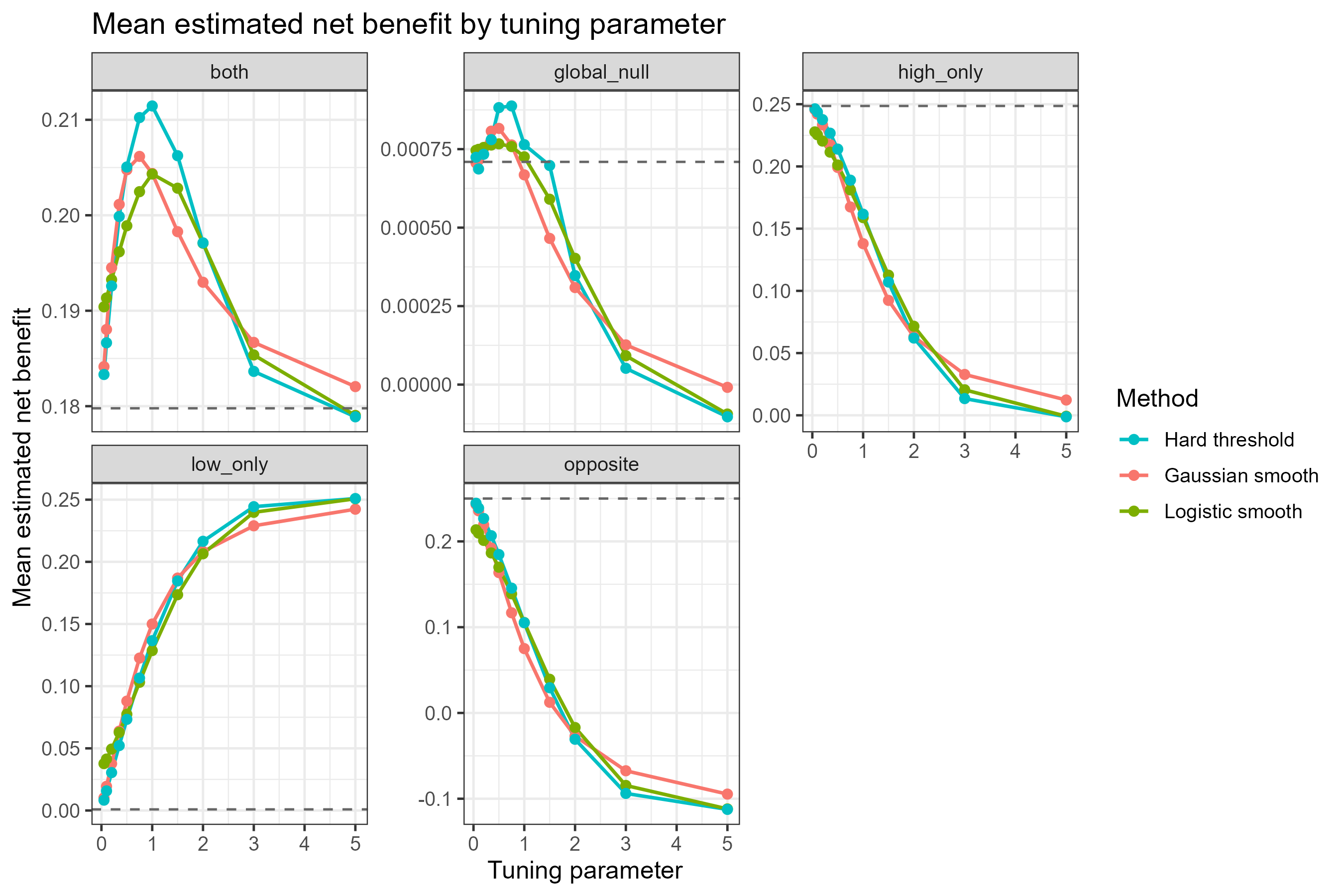}
\caption{Simulation study 1: mean estimated net benefit by tuning parameter. The dashed horizontal line represents the strict hierarchical rule.}
\label{fig:sim1-delta}
\end{figure}

\section{Discussion}

We proposed a calibrated soft-loading pairwise comparison framework for hierarchical composite endpoints. 
The method retains the pairwise comparison structure of the win ratio and generalized pairwise comparisons, while replacing hard transitions between endpoint levels with continuous loading functions. 
The resulting estimator is a two-sample U-statistic, permitting standard large-sample inference through Hoeffding projection.

The main motivation is not that soft loading uniformly improves upon the win ratio. 
Rather, the framework makes explicit a trade-off that is already present in hierarchical composite endpoint analysis. 
Strict hierarchy protects the clinical priority of high-ranking outcomes, but it may underuse lower-priority outcomes when those outcomes carry most of the treatment signal. 
Soft loading allows lower-priority outcomes to contribute when higher-priority outcomes are close to tied, at the cost of potentially diluting high-priority signals.

This perspective also clarifies the relationship between the proposed method and threshold-based extensions of the win ratio. 
Hard-threshold methods relax the hierarchy by using discontinuous decision boundaries. 
The proposed method uses continuous loading functions, allowing smooth tuning and efficiency-based parameter selection. 
This may be useful when the analyst wants to avoid abrupt changes in pairwise decisions induced by small changes around a threshold.

Several extensions remain. 
First, time-to-event endpoints with censoring require additional treatment, such as inverse probability weighting or probabilistic pairwise scores. 
Second, the effect of endpoint correlation deserves careful investigation, particularly because conditioning on near-ties at high-priority levels may change the distribution of lower-priority endpoint differences. 
Third, data-adaptive selection of the loading parameter should account for type I error inflation; sample splitting or pre-specification may be needed for confirmatory analyses.

In summary, the calibrated soft-loading framework provides a continuous and interpretable extension of hierarchical pairwise comparisons. 
It offers a principled way to study how much lower-priority endpoint information should be allowed to contribute while preserving the clinical structure of prioritized composite endpoints.

\section*{Disclosure Statement}

The author spends the majority of time supporting clinical trials and has limited bandwidth for research; any remaining imperfections are acknowledged with apology.

\bibliographystyle{plainnat}
\bibliography{references}

\end{document}